\newcommand{\fixme}[1]{\textcolor{black}{#1}}
\definecolor{softblue}{rgb}{0.0627, 0.3333, 0.6039} % 80% blue and 20% black
\DeclareMathOperator*{\argmin}{arg\,min}
\title{Generalized Ray Tracing with Basis Functions\\ for Tomographic Projections}
\author{Youssef Haouchat\textsuperscript{1}, Sepand Kashani\textsuperscript{2}, Philippe Th\'evenaz\textsuperscript{1},~\IEEEmembership{Fellow,~IEEE} and Michael Unser\textsuperscript{1},~\IEEEmembership{Fellow,~IEEE}
\thanks{
\noindent\textsuperscript{1} Biomedical Imaging Group, EPFL, Switzerland \\
\indent\textsuperscript{2} Center for Imaging, EPFL, Switzerland}}
\begin{document}

\newcommand\submittedtext{%
  \footnotesize This work has been submitted to the IEEE for possible publication. Copyright may be transferred without notice, after which this version may no longer be accessible.}

\newcommand\submittednotice{%
  \begin{tikzpicture}[remember picture,overlay]
    \node[anchor=south,yshift=10pt] at (current page.south) {\fbox{\parbox{\dimexpr0.65\textwidth-\fboxsep-\fboxrule\relax}{\submittedtext}}};
  \end{tikzpicture}%
}
 \maketitle

\begin{abstract}

This work aims at the precise and efficient computation of the x-ray projection of an image represented by a linear combination of general shifted basis functions that typically overlap. We achieve this with a suitable adaptation of ray tracing, which is one of the most efficient methods to compute line integrals. In our work, the cases in which the image is expressed as a spline are of particular relevance. The proposed implementation is applicable to any projection geometry as it computes the forward and backward operators over a collection of arbitrary lines. We validate our work with experiments in the context of inverse problems for image reconstruction to maximize the image quality for a given resolution of the reconstruction grid. \\
\end{abstract}

\begin{IEEEkeywords}
X-ray, splines, image reconstruction, inverse problems
\end{IEEEkeywords}

\section{Introduction}
In this paper, we propose an exact method to compute the x-ray transform of an image with arbitrary geometry. The accuracy results from the high order of approximation that comes with the representation of data through spline models, combined with the exact computation of the integrals found in x-ray transforms. Our approach turns out to be computationally efficient as well.

\subsection{State of the Art}

The computation of ray-based operators for the x-ray transform often takes advantage of a ray tracer that performs a line integration. Widely used open-source software packages such as the Astra toolbox~\cite{astra}, the TIGRE toolbox \cite{tigre}, or the Reconstruction Toolkit \cite{rtk} employ the Siddon ray tracing algorithm \cite{siddon} or its accelerated variant~\cite{siddonaccelTigre}.  These methods assume the image is piecewise-constant over the cells delimited by a grid, which corresponds to a pixel-based representation. To avoid blocking artifacts or to make an implicitly smoother description of the image, they also implement interpolation-kernel methods \cite{josephkernel} or exploit GPU-based texture managers. For predefined projection geometries, one can also resort to more advanced methods such as footprint-based~\cite{fesslercone}, distance-driven \cite{distancedriven2}, or convolutional \cite{convolutional2} for more precise projection models. The authors of \cite{entezari, mehrsahtv} have studied the impact of a richer representation of the image in the context of x-ray imaging but they do not exploit the computational efficiency of ray tracing as we do in our approach.

\subsection{X-Ray-Transform with Basis Functions}

Let a line on the plane $\R^2$ be described parametrically in terms of $y\in \R$ as the set
\begin{equation}
\label{rayparam}
    \{t\V\theta\ + y\V\theta^{\perp}\in \R^2\ |\ t \in \R\},
\end{equation}

\noindent where $\V\theta = (\cos\theta,\ \sin\theta)\in \mathbb{S}^1(\R)$ is the unit vector directing the line that forms an angle $\theta \in \R$ with the horizontal axis. The vector $\V\theta^\perp = (\sin\theta,\ -\cos\theta) \in \mathbb{S}^1(\R)$ is a unit vector, orthogonal to $\V\theta$, such that $y\V\theta^{\perp}$ can be interpreted as the orthogonal shift by $y$ of the ray relative to the origin. In the context of imaging, $y$ is often taken to correspond to the position of a detector. \\
The x-ray transform \cite{Kak, natterer2001mathematics} of the integrable function \(\text{$f : \mathbb{R}^2 \to \mathbb{R}$}\) corresponds to the collection of all its integrals along such lines. It is expressed in terms of $\theta\in\R$ and $y\in\R$ as

\begin{equation}
    \label{eq:xray}
    \mathcal{P}_{\theta}\{f\}(y) = \int_{\R} f(t \V\theta + y \V\theta^\perp)\ \mathrm{d} t.
\end{equation}

\begin{figure}[!t]
    \centering
    \begin{tikzpicture}
        % Draw the second image first
        \node[anchor=south west,inner sep=0] (main2) at (7,0.5) {\includegraphics[width=0.6\linewidth]{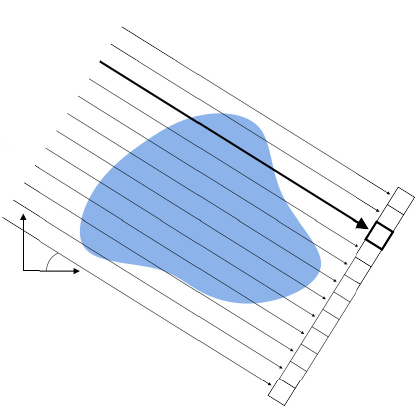}};
        \node[anchor=center, text=black, font=\small ] at (10.55cm,4.5cm) {\contour{white}{object $f$}};
        \node[rotate = 58, anchor=center, text=black, font=\small ] at (7.5cm,4.3cm) {\contour{white}{sources}};
        \node[anchor=center, text=black, font=\small ] at (7.5cm,2.5cm) {$\theta$};
        \node[rotate = 58, anchor=center, text=black, font=\small ] at (11.6cm,1.7cm) {\contour{white}{array of detectors indexed by $y$}};

    \end{tikzpicture}
    \caption{Projection along a line at a specific angle indexed by $\theta$ onto a specific detector indexed by $y$. In this paper, we consider every projection line independently.}
    \label{schemarays}
\end{figure}

\noindent This measurement operator is widely used in tomography to solve inverse problems in modalities such as x-ray scan, positron emission tomography, or cryogenic electron tomography \cite{fessler, epstein2008}. Every detector is indexed by $y$, while every ray is indexed by $\theta$ and $y$, as hinted in Figure~\ref{schemarays}.\\

\noindent The function $f$ in (\ref{eq:xray}) is the model of an image. It is often assumed to consist in the translations of a basis generated from ${\varphi:\R^2 \rightarrow \R}$ placed on a uniform Cartesian grid. For simplicity, and without loss of generality, we choose the stepsize of the grid to be equal to $1$. We also assume $f$ to be compactly supported. Then, we have that
\begin{equation}
\label{eq:def_f}
f = \sum_{\mathbf{k} \in \Omega} c_{\mathbf{k}} \varphi(\cdot - \mathbf{k}),
\end{equation}
\noindent where $\Omega = \{1,\dotsc, N\}\times\{1,\dotsc, N\}$ with $N^2$ the number of basis functions needed to represent $f$, $\mathbf{k}$ is an index on the grid, and $c_{\mathbf{k}}$ is the coefficient of $f$ associated to the shifted basis function $\varphi(\cdot - \mathbf{k})$. Thus, the x-ray projection of a function $f$ parameterized by its coefficients $(c_{\mathbf{k}})_{\mathbf{k}\in \Omega}$ is expressed as

\begin{equation}
\begin{aligned}[b]\label{xrtbasis}
    \mathcal{P}_{\theta}\{f\}(y) &= \sum_{\mathbf{k} \in \Omega} c_{\mathbf{k}}  \mathcal{P}_{\theta}\{\varphi(\cdot - \mathbf{k})\}(y)\\
 &=\sum_{\mathbf{k} \in \Omega} c_{\mathbf{k}}  \mathcal{P}_{\theta}\{\varphi\}(y-\langle \mathbf{k}, \V \theta^\perp \rangle),
\end{aligned}
\end{equation}
with our concern being the efficient and accurate evaluation of (\ref{xrtbasis}) for arbitrary $\theta$ and $y$, given some underlying generator~$\varphi$.\\

\fixme{We summarize below the main notations used throughout the paper. These will be referenced frequently in the sequel.}

\begin{table}[!ht]
\caption{\fixme{Table of Notations}}
\centering
\renewcommand{\arraystretch}{1.2}  % Increase row spacing
\normalsize
\begin{tabular}{ll}
\toprule
{Symbol} & {Description} \\
\midrule
$\mathbb{S}^1(\R)$ & $\left\{ (\cos\theta, \sin\theta) \in \mathbb{R}^2 \mid \theta \in [0, 2\pi) \right\}$\\
$\V \theta \in \mathbb{S}^1(\R)$ & Direction of the projection\\
$\V \theta^\perp \in \mathbb{S}^1(\R)$ & Direction orthogonal to the projection\\
$L^1(\mathbb{R}^2)$ & $\left\{ f : \mathbb{R}^2 \to \mathbb{R} \ \middle| \ \int_{\mathbb{R}^2} |f(\mathbf{x})|\,\dd \mathbf{x} < \infty \right\}$\\
$\Omega$ & Set of grid indices, denoted $\mathbf{k}\in\R^2$\\
$\left(\theta, y\right)$ & Line parameterization (angle, offset) \\
$\mathcal{P}_\theta\{f\}(y)$ & Line integral of $f$ along $(\theta, y)$\\
$\varphi(\cdot - \mathbf{k})$ & Shifted basis generator centered at $\mathbf{k}$\\
$\varphi_\theta(y)$ & Line integral of $\varphi$ along $(\theta, y)$ \\
$x_+$ & $\max(x,0)$ \\
$\hat{f}$ & Fourier transform of $f \in L^1(\mathbb{R}^2)$ \\
$\mathbf{H}_\varphi$ & Projection operator discretized with $\varphi$ \\
$\delta$ & Dirac distribution\\
\bottomrule
\end{tabular}
\label{tab:notations}
\end{table}

\subsection{Contribution}

Ray-oriented methods such as ray tracing are particularly suitable for the handling of arbitrary geometries which may involve irregular contributions of $y$ and $\theta$ in (\ref{xrtbasis}), thus prohibiting Fourier-based methods in such cases. Their principle (Bresenham's \cite{Bresenham} or Siddon's \cite{jacobs1998fast, siddon, rt} methods) is to cast a ray through a medium composed of simple structures that form a partition of the domain, to compute the intersections between the ray and these structures, and to perform partial integrations over each intersection. Traditional methods correspond to the case when the basis generator $\varphi$ takes a constant value and has a shape that coincides with the cells of the grid. In that case, $\mathcal{P}_{\theta}\{\varphi\}(y)$ is precisely the length of such intersections, to which ray tracing provides direct access.\\ 

\begin{figure}[t]
    \centering
    \begin{tikzpicture}
    \node[anchor=south west,inner sep=0] (main2) at (7,0.5) {\includegraphics[width=1\linewidth]{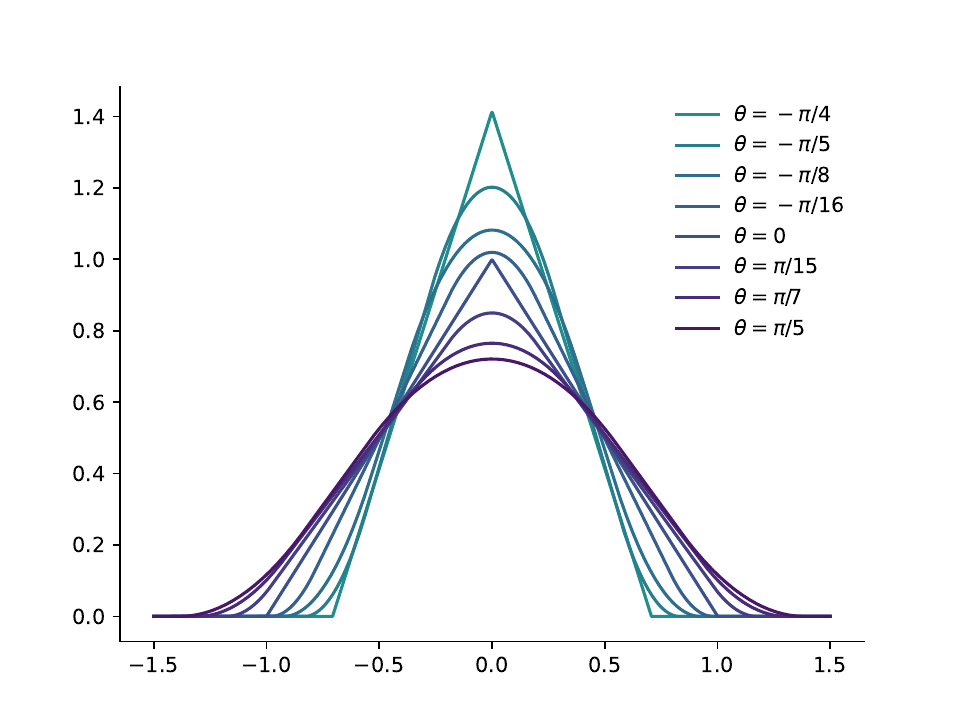}};
    \node[rotate=90, anchor=center, text=black, font=\small ] at (7.2cm,3.65cm) {{$\varphi_\theta(y)$}};
    \node[anchor=center, text=black, font=\small ] at (11.4cm,0.7cm) {{$y$}};
    \end{tikzpicture}
    \caption{Projected bases $y\mapsto \varphi_\theta(y)$. Here, $\varphi$ is the $3$-directional box-spline $\varphi_\theta^{\text{box}}$. The curves are indexed by the projection angle $\theta\in[-\pi/4, \pi/4)$.}
    \label{plotsplines}
\end{figure}

\begin{figure*}[t]
    \centering
\begin{minipage}{0.50\textwidth}
\begin{algorithm}[H]
\caption{Contribution of one basis function to the x-ray}\label{alg:cap}
\begin{algorithmic}
\Require $C, \textcolor{black}{\mathbf{x}_k}$, $\textcolor{black}{\theta}$, $\textcolor{black}{(p,q)}$\\  \Comment{Image, intersection ray/cell, angle, cell index} \\
\State $c = C[\textcolor{black}{p, q}]$ \Comment{Coefficient of the basis function}\\
\State $\textcolor{black}{\mathbf{o}} = (\textcolor{black}{p}+\dfrac{1}{2},\ \textcolor{black}{q}+\dfrac{1}{2})$ \Comment{Position of the center}\\
\State $y_k = \langle(\textcolor{black}{\mathbf{o}} - \textcolor{black}{\mathbf{x}_k}), \textcolor{black}{\V\theta^\perp} \rangle$ \Comment{Offset relative to the center}\\
\\
\Return $P = c \times \varphi_{\textcolor{black}{ \theta}}(y_k)$ \Comment{Integral contribution}
\end{algorithmic}
\end{algorithm}
    % Left side content
\end{minipage}%
\hfill
\begin{minipage}{0.45\textwidth}
    % Right side content
    \centering
    \begin{tikzpicture}
        % Draw the second image first
        \node[anchor=south west,inner sep=0] (main2) at (7,0.5) {\includegraphics[scale = 0.7]{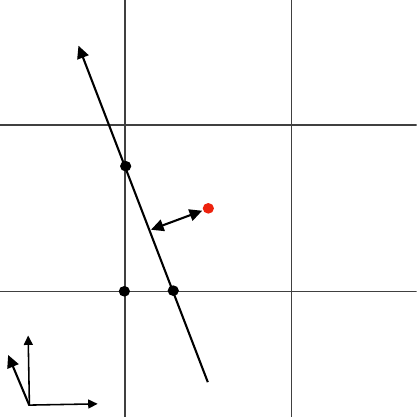}};
        \node[anchor=center, text=black, font=\large ] at (9.05cm,3.1cm) {\contour{white}{$y_k$}};
        \node[anchor=center, text=black, font=\large ] at (9.5cm,1.7cm) {$\mathbf{x}_k$};
        \node[anchor=center, text=black, font=\large ] at (7.9cm,3.5cm) {\contour{white}{$\mathbf{x}_{k+1}$}};
        \node[anchor=center, text=black, font=\large ] at (7.8cm,5.1cm) {\contour{white}{$P$}};
        \contourlength{3.5pt}
        \node[anchor=center, text=black, font=\large ] at (8cm,1.7cm) {$(p,q)$};
        \node[anchor=center, text=red, font=\large ] at (9.75cm,3.1cm) {\contour{white}{$\mathbf{o}$}};
        \node[anchor=center, text=black, font=\large ] at (6.9cm,1.2cm) {$\V\theta$};

    \end{tikzpicture}
    \caption{Relevant quantities in Algorithm \ref{alg:cap}.}
    \label{schemaxrt}
\end{minipage}
\end{figure*}

In this work, we generalize the ray-tracing algorithm to handle any basis function even if the support overflows a cell while remaining exact. The challenge is to combine the {domain decomposition} (a uniform Cartesian grid allowing ray-tracing techniques) with an expansion into basis functions whose support is general and does not coincide with that of the cell  \cite{herman2009fundamentals, unserblu}. Higher-order basis functions allow us to improve image representations, which ultimately leads to better quality.
The contributions of this paper are as follows.\\

\noindent (1) We propose an efficient variant of ray tracing for the exact computation of the x-ray transform and its adjoint, able to handle overlapping basis functions. The algorithm can handle any projection geometry and computes $\mathcal{P}_{\theta_m}\{f\}(y_m)$ for any set of rays parameterized by $(\theta_m, y_m)_{m\leq M}$.\\
\noindent (2) We provide explicit expressions of $\mathcal{P}_{\theta}\{\varphi\}(y)$ for specific basis functions, which we take advantage of in our implementation.\\
\noindent (3) We present image reconstruction experiments that showcase the benefits of higher-order basis functions in tomographic reconstruction problems.\\

\noindent In the sequel, we rely on the shorthand notation $\varphi_\theta$ to denote the x-ray transform at the projection angle $\theta$ of the basis generator $\varphi$, as shown in Figure \ref{plotsplines}, with $y \in \R$ such that
\begin{equation}
\label{eqproj}
    \varphi_\theta : y \mapsto  \mathcal{P}_{\theta}\{\varphi\}(y).
\end{equation}

\section{Generalized x-Ray Projections}

We propose an efficient algorithm to compute the x-ray projection (\ref{xrtbasis}) for an arbitrary ray of angle $\theta$ and offset $y$ using ray tracing. To streamline the exposition, we defer the derivations of closed-form expressions for \(\varphi_\theta(y)\) to Section~\ref{specific}.

\subsection{Ray-Tracing Routine}
\label{section:raytracing}

 Our geometric setup is as follows: the coefficients of $f$ in the chosen basis are stored in an image array $C$ and placed on a two-dimensional uniform Cartesian grid, as sketched in Figure \ref{octagon}. This tiling allows us to efficiently compute the intersection points $\mathbf{x}_{k}$ between rays and cells. In our implementation, we sequentially compute such intersections with a ray-tracing routine by taking advantage of the computer-graphics library Dr.Jit~\cite{Jakob2020DrJit}. Traditional pixel-based approaches output the sum of all the intersection lengths given by $c_k\|\mathbf{x}_{k+1}-\mathbf{x}_{k}\|_2$, where $c_k$ is the value of the crossed pixel. This corresponds to the evaluation of (\ref{xrtbasis}) with $\varphi$ being the rectangle function. 

\subsection{Proposed General Algorithm}

\begin{algorithm}[t]
\caption{\centering X-ray projection with \fixme{overlapping} basis functions via ray tracing}\label{alg:full}
\begin{algorithmic}
\Require $C, \textcolor{black}{\theta}, \textcolor{black}{\mathbf{x}_0}$\\  \Comment{Image, angle, first intersection point of ray tracing}\\

\While{\textit{ray tracing is ongoing}}
\State $\textcolor{black}{\mathbf{x}_{k+1}} = $ Ray-trace$(\textcolor{black}{\mathbf{x}_k}, \textcolor{black}{\theta})$ \Comment{Ray-tracing update}\\

\State $\textcolor{black}{(p,q)} = \left\lfloor\dfrac{\textcolor{black}{\mathbf{x}_k} + \textcolor{black}{\mathbf{x}_{k+1}}}{2}\right\rfloor$ \Comment{Indices of crossed cell}\\

\State $P \gets P\ + $ Algorithm \ref{alg:cap} $(C, \textcolor{black}{\mathbf{x}_k}, \textcolor{black}{\theta}, (\textcolor{black}{p,q}))$\\

\If{$\textcolor{black}{{x}_{k+1,1}} \neq \textcolor{black}{p}$ \textbf{and} $\textcolor{black}{{x}_{k,1}} \neq \textcolor{black}{p}$}
\State $\textcolor{black}{(p_\text{L}, q_\text{L})} = \textcolor{black}{(p-1, q)}$ \Comment{Indices of left cell}\\

\State $P \gets P\ + $ Algorithm \ref{alg:cap} $(C, \textcolor{black}{\mathbf{x}_k}, \textcolor{black}{\theta}, (\textcolor{black}{p_\text{L},q_\text{L}}))$
\EndIf \\

\If{$\textcolor{black}{{x}_{k+1,1}} \neq \textcolor{black}{p+1}$ \textbf{and} $\textcolor{black}{{x}_{k,1}} \neq \textcolor{black}{p+1}$}

\State $\textcolor{black}{(p_\text{R},q_\text{R})} = \textcolor{black}{(p+1, q)}$ \Comment{Indices of right cell}\\
\State $P \gets P\ + $ Algorithm \ref{alg:cap} $(C, \textcolor{black}{\mathbf{x}_k}, \textcolor{black}{\theta}, (\textcolor{black}{p_\text{R},q_\text{R}}))$
\EndIf

\State $\textcolor{black}{k} \gets \textcolor{black}{k+1}$

\EndWhile

\Return $\mathcal{P}_{\textcolor{black}{\theta}}\{f\}(y) = P$

\end{algorithmic}
\end{algorithm} 

In this work, we consider basis functions whose support extends over more than one cell. These generalized bases overlap and require us to revisit the traditional procedure. In order to compute (\ref{xrtbasis}) efficiently with overlapping basis functions, we embed the ray tracer in Algorithm~\ref{alg:full} that also considers some non-intersected cells. The ray-tracing routine discussed in Section \ref{section:raytracing} efficiently computes the update ``$\mathbf{x}_{k+1}~=~$~{Ray-trace}\((\mathbf{x}_k, \theta)\)'', where the index \( k \) refers to the ray-tracing step. At each step $k$, as we cross an intersected cell, we employ Algorithm \ref{alg:cap} (illustrated in Figure \ref{schemaxrt}) to compute the contribution of the basis function centered at this very cell.  
Then, the key point of our method is to complement this {contribution} with that {of the basis functions centered at neighboring cells}.\\

We illustrate our approach in Figure~\ref{octagon}. While we evaluate the contribution of the basis functions centered at the cells that are crossed by the ray (shaded, identified via the ray-tracing routine), we also evaluate the contribution of the surrounding neighbors centered at the cells that are not crossed but still contribute to the line integral (hatched). The number of neighbors to evaluate on the left and right side at each ray tracing step is $\lceil \sqrt{2}R -1\rceil$ according to Lemma~\ref{lemma}, which is $1$ in this case. Two indices of basis functions are emphasized in Figure~\ref{octagon}: $\mathbf{k}_1$, which corresponds to a basis function that contributes to the line integral; and $\mathbf{k}_2$, which corresponds to a basis function that does not.
 In the specific case where $\varphi$ is a rectangle function that matches the shape of a cell, there is no neighbor to consider, and \fixme{Algorithm~2 reduces to Siddon's method}.\\

\contourlength{1pt}
\begin{figure}[t]
    \centering
        \begin{tikzpicture}
        % Draw the second image first
        \node[anchor=south west,inner sep=0] (main2) at (7,0.5) {\includegraphics[trim = 3cm 0cm 3cm 0cm, clip, width = \linewidth]{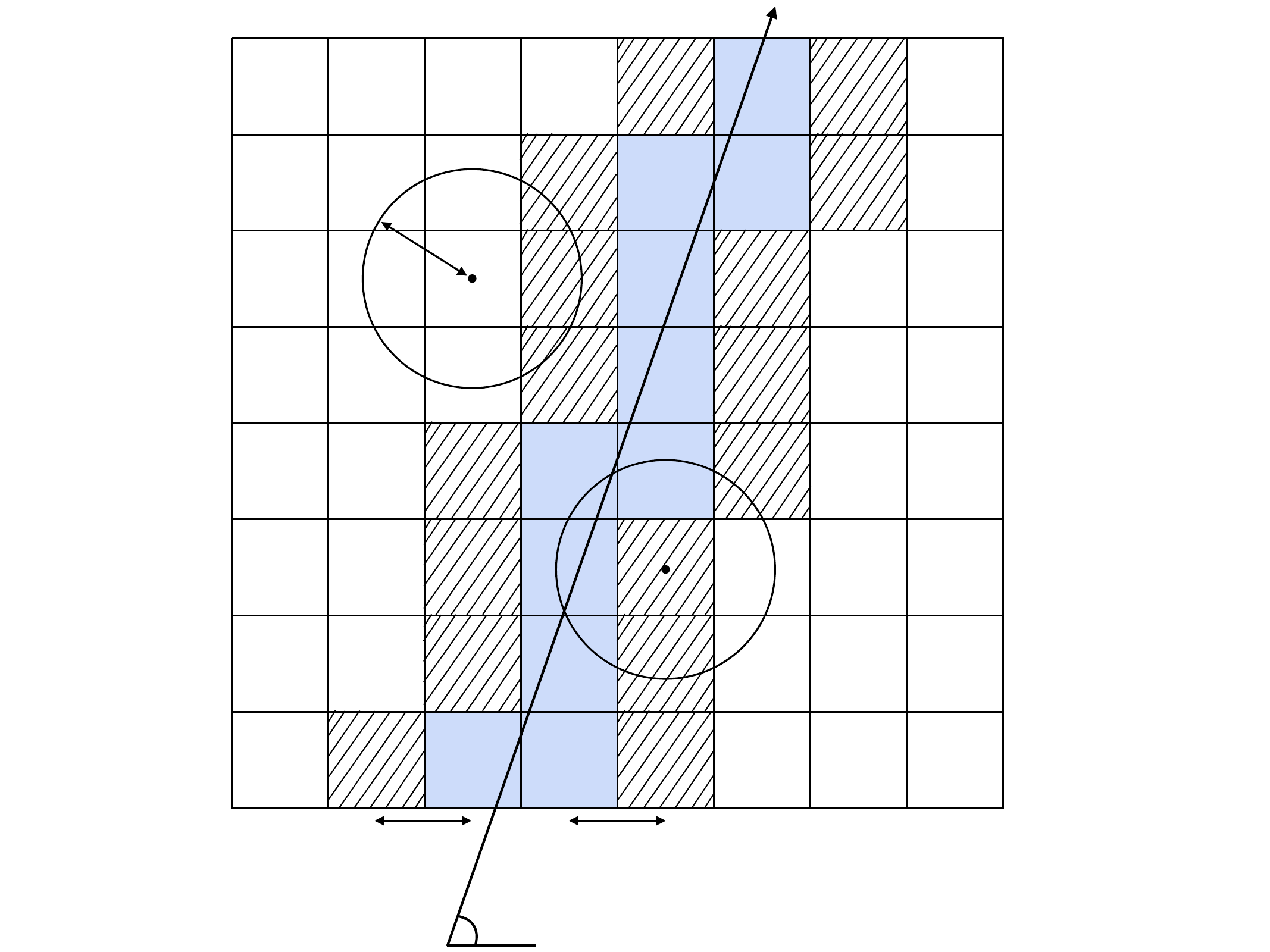}};
        \node[anchor=center, rotate=0, text=black, font=\normalsize ] at (12.7cm,8.6cm) {\contour{white}{$\varphi_{\theta_m}(y_m)$}};
        \node[anchor=center, rotate=0, text=black, font=\normalsize ] at (10.3cm,.8cm) {$\theta_m$};
        \node[anchor=center, rotate=0, text=black, font=\small] at (11.1cm,1.3cm) {$\lceil \sqrt{2}R -1\rceil$};
        \node[anchor=center, rotate=0, text=black, font=\small] at (9.4cm,6cm) {$R$};
        \node[anchor=center, rotate=0, text=black, font=\scriptsize] at (10.1cm,5.75cm) {$\mathbf{k}_2$};
        \node[anchor=center, rotate=0, text=black, font=\scriptsize] at (11.7cm,3.4cm) {\contour{white}{$\mathbf{k}_1$}};
        \end{tikzpicture}
    \caption{Ray tracing when the basis generator has support included in a disk of \fixme{radius} $R$.}
    \label{octagon}
\end{figure}
\contourlength{2pt}

In Algorithm \ref{alg:full}, we only take into account the case where \(|\sin\theta|~>~1/\sqrt{2}\) as in Figure~\ref{octagon}, namely, the ``mainly vertical'' rays where the left- and right-side neighbors are considered. For ``mainly horizontal'' rays (\(|\cos\theta| \geq 1/\sqrt{2}\)), upper and lower neighbors are considered instead. This case distinction on the value of $\theta$ is also part of traditional ray-tracing algorithms.

Algorithm~\ref{alg:full} only includes the case where we consider immediate neighbors, but we have also implemented a version of the approach that is extended to larger basis functions by considering additional neighbors. Lemma~\ref{lemma} provides the number of neighbors to consider when one can bound the shape of the support by either a disk or an octagon. The computational cost of the algorithm depends on this number of neighbors, as discussed in Section~\ref{runtime}.

\begin{lemma}
\label{lemma}
Let $\varphi\in L^1(\R^2)$ be a compactly supported basis generator. Let $
f~=~\sum_{\mathbf{k}} c_{\mathbf{k}} \varphi(\cdot - \mathbf{k})$
and consider a line integral along a ``mainly vertical'' (horizontal, respectively) line. The basis functions that contribute to this line integral are among those whose centers lie within the grid cells that are at most $K$ \textit{horizontal} (\textit{vertical}, respectively) neighbors away from the intersected cells.\\
\noindent (1) If the support of $\varphi$ is included in a disk of radius $R$, then
\begin{equation}
    K = \lceil \sqrt{2}R - 1 \rceil.
\end{equation}
(2) If the support of $\varphi$ is included in an octagonal  shape of girth $L$ as is Figure~\ref{octagon_shape}, then
\begin{equation}
    K=\lfloor L/2 \rfloor.
\end{equation}

\end{lemma}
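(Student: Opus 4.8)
The plan is to set up the geometry precisely and reduce the claim to a one-dimensional counting problem along the direction transverse to the ray. Consider a ``mainly vertical'' line, so that $|\sin\theta| > 1/\sqrt{2}$; by symmetry the horizontal case follows by exchanging the roles of the coordinate axes. A basis function centered at $\mathbf{k}$ contributes to the line integral $\mathcal{P}_\theta\{f\}(y)$ precisely when the line $(\theta,y)$ meets the support of $\varphi(\cdot-\mathbf{k})$. Since the offset of the line relative to the center $\mathbf{k}$ is $\langle \mathbf{k}-\mathbf{x}_0,\V\theta^\perp\rangle$ in the notation of Algorithm~\ref{alg:cap}, the support condition is that the perpendicular distance from $\mathbf{k}$ to the line is small enough to fall within the support. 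The goal is to bound, for each row $q$ crossed by the ray, how many horizontal neighbors $p$ away from the intersected cell can still have their centers close enough to the line.

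First I would treat case~(1), the disk of radius $R$. The contribution condition becomes: the line intersects the disk of radius $R$ centered at $\mathbf{k}$, equivalently the perpendicular distance $|\langle \mathbf{k}-\mathbf{x}_0,\V\theta^\perp\rangle|$ from the center to the line is at most $R$. Within a fixed row, as one moves horizontally by one cell, the center shifts by the unit vector $(1,0)$, so the perpendicular offset changes by $\langle(1,0),\V\theta^\perp\rangle = \sin\theta$. The worst case for the number of contributing neighbors is the shallowest admissible angle, namely $|\sin\theta| = 1/\sqrt{2}$, since a smaller $|\sin\theta|$ packs more centers within a perpendicular band of half-width $R$. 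The number of unit horizontal steps that keep the offset within $[-R,R]$ is then governed by $R/|\sin\theta| \le \sqrt{2}R$; subtracting the intersected cell itself and taking a ceiling to account for the discrete count of whole cells on each side yields $K = \lceil \sqrt{2}R - 1\rceil$. I would make the off-by-one bookkeeping rigorous by comparing the half-width of the band measured in cell units against the integer grid.

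For case~(2), the octagonal support of girth $L$, I would exploit that an octagon aligned with the grid (Figure~\ref{octagon_shape}) has flat faces parallel and at $45^\circ$ to the axes, so that in each admissible angular sector the transverse extent of the support, measured in horizontal cell units, is controlled directly by the half-girth $L/2$ rather than by an isotropic radius. The octagon is tailored so that the worst-case horizontal reach is exactly $\lfloor L/2 \rfloor$ cells, eliminating the $\sqrt{2}$ factor and the ceiling that were forced by the circumscribing disk; the floor reflects that only whole cells whose centers lie within the support can contribute.

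I expect the main obstacle to be the careful discrete accounting, that is, proving the exact constants $\lceil\sqrt{2}R-1\rceil$ and $\lfloor L/2\rfloor$ rather than merely order-correct bounds. The subtlety is that ``neighbors away from the intersected cell'' counts whole cells, while the underlying condition is continuous (distance to a line), so one must argue that the extremal configuration is attained at the boundary angle $|\sin\theta| = 1/\sqrt{2}$ and that the rounding direction (ceiling for the disk, floor for the octagon) correctly captures the last partially covered cell. Verifying that no contributing center is missed, i.e.\ that the stated $K$ is an upper bound on the horizontal neighbor index for every admissible $\theta$ and every $y$, is where the geometry of the worst-case angle must be pinned down explicitly.
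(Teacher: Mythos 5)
Your proposal is correct and follows essentially the same route as the paper's proof: both reduce the contribution of the basis function at $\mathbf{k}$ to the condition that the perpendicular distance from its center to the line is at most $R$, convert this into a horizontal offset by dividing by $|\sin\theta| \geq 1/\sqrt{2}$ (yielding the $\sqrt{2}R$ bound), and then discretize to $K = \lceil \sqrt{2}R - 1\rceil$, with case (2) handled analogously. The only difference is one of framing—you count the centers of a row lying inside the transverse band of half-width $R$ around the line, whereas the paper fixes the center at the origin and bounds the horizontal coordinate $y/\sin\theta$ where the ray crosses that row—but the underlying arithmetic is identical.
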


\begin{figure}[t!]
    \centering
    \begin{tikzpicture}
        % Draw the second image first
        \node[anchor=south west,inner sep=0] (main2) at (7,0.5) {\includegraphics[scale = 0.7]{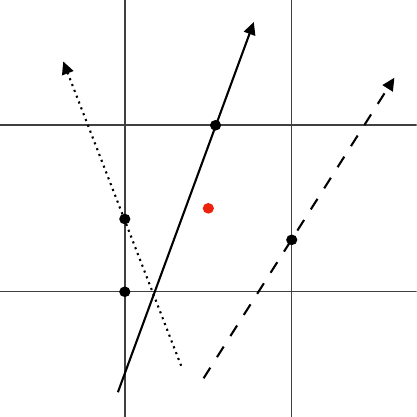}};
        \node[anchor=center, text=black, font=\large ] at (11.1cm,2.5cm) {\contour{white}{$\mathbf{x}_{k+1}$}};
        \node[anchor=center, text=black, font=\large ] at (7.8cm,2.8cm) {\contour{white}{$\mathbf{x}_{k+1}$}};
        \node[anchor=center, text=black, font=\large ] at (9.1cm,4.2cm) {\contour{white}{$\mathbf{x}_{k+1}$}};
        \node[anchor=center, text=black, font=\large ] at (8cm,1.7cm) {$(p,q)$};
\end{tikzpicture}
    \caption{Conditions for neighbor evaluation: one does {not} evaluate the contribution of the left- (right-, respectively) neighbor coefficient for the dotted (dashed, respectively) ray.}
    \label{fig:condition}
    \vspace{-0.3cm}
\end{figure}

\noindent The proof of Lemma~\ref{lemma} is provided in Appendix~\ref{appendixproof}.\\

In Algorithm~\ref{alg:full} and its variants with additional neighbors, immediate neighbors must be evaluated under a specific condition. Indeed, let \((p,q)\) be the index of the crossed cell and $\mathbf{x}_{k}=({x}_{k,1}, {x}_{k,2})$ the entry line-cell intersection point at iteration $k$ of ray tracing. The determination of the contribution is as follows:

\begin{itemize}
    \item If (i) \({x}_{k,1}\) equals \(p\) or (ii) \({x}_{k+1,1}\) equals \(p\), then the left neighbor is not evaluated. It is the case of the dotted ray of Figure \ref{fig:condition}.
    \item If (iii) \({x}_{k,1}\) equals \(p+1\) or (iv) \({x}_{k+1,1}\) equals \(p+1\), then the right neighbor is not evaluated. It is the case of the dashed ray of Figure \ref{fig:condition}.
\end{itemize}

\noindent One should \textit{not} evaluate the contribution of a neighboring cell (I) if the ray just crossed this very cell at iteration $(k-1)$ or (II) if the ray will cross this very cell at iteration $(k+1)$. For ``mainly horizontal'' rays, the conditions concern $x_{k,2}$ and the vertical cell index $q$.\\

\noindent The outcome of Algorithm \ref{alg:full} is the x-ray projection of a linear combination of translations of a basis generator defined on a regular grid for a single ray parameterized by $(\theta, y)$. Instead of $y$, we equivalently take $\mathbf{x}_0$ as an input in the algorithm---it is the first intersection point between the ray and the grid. Our implementation loops in parallel over a collection of arbitrarily parametrized lines. The back-projection operator is the adjoint of the x-ray transform and has been implemented as well, and exactly matched with the forward projection.

\section{Explicit Projections}
An efficient scheme to evaluate $\varphi_\theta$ is needed to make our method practical. This is addressed below with the derivation of closed-form expressions for useful basis generators.

\subsection{Separable Functions}
The x-ray projection of a $2$D separable function is a starting point to apprehend our calculus. Let $\varphi \in L^1(\R^2)$ be a $2$D separable function, in the sense that there exist functions $\phi_1$ and $\phi_2$ in $L^1(\R)$  such that
\begin{equation}
\forall (x_1,x_2)\in\R^2,\ \ \varphi(x_1, x_2) = \phi_1(x_1)\phi_2(x_2).    
\end{equation}

\noindent For the moment, consider only two directions and let $(\mathbf{e}_1, \mathbf{e}_2)$ be the Cartesian basis. Then, Proposition \ref{prop} with $D=2$ states that, for $y\in\R$,
\begin{equation}
\label{xsa}
\varphi_\theta(y) = \Bigg(\ \dfrac{1}{|\cos\theta|}\phi_1\Big(\dfrac{\cdot}{\cos\theta}\Big)\ast \dfrac{1}{|\sin\theta|}\phi_2\Big(\dfrac{\cdot}{\sin\theta}\Big) \ \Bigg)(y).
\end{equation}
\fixme{Let $\hat{\phi}_d$ denote the Fourier transform of $\phi_d$}. If $\hat{\phi}_d(0)=1$, then we can replace the term $\dfrac{1}{|a|}\phi_d\Big(\dfrac{\cdot}{a}\Big)$ by a Dirac distribution $\delta$ when $a = 0$. We manage here to write the x-ray transform of $\varphi$ as a simple convolution, which will lead to explicit expressions of the latter in specific cases (Section \ref{specific}) and to efficient computations as a consequence.

\subsection{Generalization with $D$ Directions}

 Consider now the case when the Fourier transform of the basis generator $\varphi \in L^1(\R^2)$ can be expressed as a product of $D$ terms instead of just two. Let $\hat{f}$ denote the Fourier transform of some function $f$, either in $L^1(\R^2)$ or in $ L^1(\R)$. \\
Then, for any projection angle, it is possible to express the x-ray projection of $\varphi$ as a $1$D convolution of $D$ rescaled atoms.
\begin{proposition}
\label{prop}
Let $\varphi \in L^1(\R^2)$ be such that its Fourier transform $\hat{\varphi}:\R^2 \mapsto \mathbb{C}$ can be written as
\begin{equation}
\label{prod}
\forall \V\xi\in\R^2,\ \  \hat{\varphi}(\V\xi) = \prod_{d=1}^D \hat{\phi}_d(\langle\V\xi, \mathbf{u}_d \rangle ),
\end{equation}
where $\hat{\phi}_d:\R\mapsto\mathbb{C}$ is the Fourier transform of the function $\phi_d:\R\mapsto\R$, and $\mathbf{u}_d\in\R^2$ is a corresponding direction.\\
\noindent Then, we have that the x-ray transform of $\varphi$ along the projection direction $\V\theta\in\mathbb{S}^1(\R)$ is
\begin{equation}
\label{conv}
    \forall y\in\R, \ \ \ \ \varphi_\theta(y) = \Big(\ \tilde\phi_1 \ast\ \cdots\ \ast \tilde\phi_D\ \Big)(y),
\end{equation}
with
\begin{equation}
\tilde\phi_d =
\begin{cases}
 \dfrac{1}{|\langle \V\theta, \mathbf{u}_d\rangle|}\phi_d\Big(\dfrac{\cdot}{\langle \V\theta, \mathbf{u}_d\rangle}\Big), \quad & \langle \V\theta, \mathbf{u}_d\rangle \neq 0 \vspace{0.15cm}\\
\delta, \quad & \langle \V\theta, \mathbf{u}_d\rangle = 0.
\end{cases}
\end{equation}

\end{proposition}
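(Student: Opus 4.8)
The plan is to prove the proposition with the Fourier slice theorem, which is the natural bridge between the $1$D transform of a projection and the $2$D transform $\hat\varphi$. The single identity I need is that the $1$D Fourier transform of $y\mapsto\varphi_\theta(y)$ is the restriction of $\hat\varphi$ to the line through the origin directed by the offset direction $\V\theta^\perp$, namely $\widehat{\varphi_\theta}(\omega)=\hat\varphi(\omega\V\theta^\perp)$. To establish it I would start from $\varphi_\theta(y)=\int_\R\varphi(t\V\theta+y\V\theta^\perp)\,\mathrm{d}t$, write out $\int_\R\varphi_\theta(y)\,e^{-\mathrm{i}\omega y}\,\mathrm{d}y$, and apply the orthonormal change of variables $\mathbf{x}=t\V\theta+y\V\theta^\perp$. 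Since $\{\V\theta,\V\theta^\perp\}$ is orthonormal the Jacobian is $1$ and $y=\langle\mathbf{x},\V\theta^\perp\rangle$, so the double integral collapses to $\int_{\R^2}\varphi(\mathbf{x})\,e^{-\mathrm{i}\omega\langle\mathbf{x},\V\theta^\perp\rangle}\,\mathrm{d}\mathbf{x}=\hat\varphi(\omega\V\theta^\perp)$. The interchange of the two integrals is legitimate precisely because $\varphi\in L^1(\R^2)$, which is the standing hypothesis.

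Second, I would insert the product hypothesis \eqref{prod}. Evaluating at $\V\xi=\omega\V\theta^\perp$ gives $\widehat{\varphi_\theta}(\omega)=\prod_{d=1}^D\hat\phi_d\big(\omega\,\langle\V\theta^\perp,\mathbf{u}_d\rangle\big)$, so each factor is $\hat\phi_d$ dilated by the scalar $a_d=\langle\V\theta^\perp,\mathbf{u}_d\rangle$. The dilation rule $\widehat{\tfrac{1}{|a|}\phi_d(\cdot/a)}(\omega)=\hat\phi_d(a\omega)$ identifies each factor as the transform of $\tilde\phi_d=\tfrac{1}{|a_d|}\phi_d(\cdot/a_d)$, and the convolution theorem turns the product of transforms into a single $1$D convolution; inverting the Fourier transform then yields $\varphi_\theta=\tilde\phi_1\ast\cdots\ast\tilde\phi_D$, which is \eqref{conv}. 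One bookkeeping point I would pin down here is the scaling direction: the slice theorem naturally produces inner products with the offset direction $\V\theta^\perp$, so reconciling this with the factor $\langle\V\theta,\mathbf{u}_d\rangle$ written in the statement is a convention check on how the $\mathbf{u}_d$ are oriented relative to $\V\theta$ and $\V\theta^\perp$ (and a sign that is immaterial when $\hat\phi_d$ is even).

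Third, I would treat the degenerate directions $a_d=0$ separately, since the dilation formula breaks down there. For such $d$ the $d$-th factor equals the constant $\hat\phi_d(0)=\int_\R\phi_d$, independent of $\omega$; under the normalization $\hat\phi_d(0)=1$ this constant is $1$, whose inverse $1$D transform is $\delta$, so $\tilde\phi_d=\delta$ as claimed. I would make this rigorous either directly in the space of tempered distributions or as the limit $a\to0$ of $\tfrac{1}{|a|}\phi_d(\cdot/a)$, which is an approximate identity of mass $\hat\phi_d(0)$ and hence converges to $\hat\phi_d(0)\,\delta$. Convolving against such a $\delta$ acts as the identity, and several simultaneous degenerate directions are harmless because $\delta\ast\delta=\delta$.

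The main obstacle is not any single computation but keeping the whole chain well-posed once Diracs enter: as soon as some $\tilde\phi_d=\delta$, the object $\tilde\phi_1\ast\cdots\ast\tilde\phi_D$ must be read as a convolution of tempered distributions, and the inversion step needs $\widehat{\varphi_\theta}$ to be regular enough for the Fourier manipulations to be genuine rather than formal. I would therefore carry the normalization $\hat\phi_d(0)=1$ as a standing assumption, check that the surviving non-degenerate factors keep the product integrable so that the inverse transform is an honest function away from the degenerate axes, and treat the degenerate factors through the approximate-identity limit so that the Dirac replacement is a true distributional limit and not a symbolic substitution. The remaining ingredients—the change of variables, the dilation rule, and the convolution theorem—are then routine.
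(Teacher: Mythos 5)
Your proposal is correct and follows essentially the same route as the paper: the Fourier slice theorem, the dilation property of the Fourier transform for the non-degenerate directions, the Dirac replacement (under the normalization $\hat{\phi}_d(0)=1$) for the degenerate ones, and the Fourier-convolution theorem to conclude. The one point where you are more careful than the paper is the orientation bookkeeping you flag: with the paper's parameterization $\mathcal{P}_{\theta}\{f\}(y)=\int_{\R} f(t \V\theta + y \V\theta^\perp)\,\mathrm{d} t$, the slice is indeed $\hat{\varphi_\theta}(\omega)=\hat{\varphi}(\omega\V\theta^\perp)$ as you derive, whereas the paper's proof invokes the slice theorem in the form $\hat{\varphi_\theta}(\xi)=\hat{\varphi}(\xi\V\theta)$, thereby absorbing the $\V\theta\leftrightarrow\V\theta^\perp$ swap into its statement of that theorem so that the scalings come out as $\langle\V\theta,\mathbf{u}_d\rangle$ exactly as written in the proposition.
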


The proof is provided in Appendix~\ref{appendixproof}.

\subsection{Derivation for Specific Basis Functions}
\label{specific}
The authors of \cite{horbelt} present a specific case of Proposition~\ref{prop} when $D=2$ and $\phi_d$ is a B-spline. \fixme{Likewise, choosing $\phi_1,\dotsc,\phi_D$ as rectangles recovers the case of $D$-directional box-splines as in \cite{entezari}. \fixme{In contrast, our formulation in Proposition~\ref{prop} is more general: it applies to any function that is separable in the Fourier domain along arbitrary directions.} In this section, we present cases of particular interest, where we now derive some novel explicit expressions of $\varphi_\theta$ for box-splines and \mbox{B-splines}. These expressions are not found in prior work and allow for the efficient computation of $\varphi_\theta(y)$ in Algorithm~\ref{alg:cap}}.  \\

\noindent Although box-splines are not separable in space, they can be factorized along some directions in the Fourier domain. 
We now derive a closed-form expression of the \mbox{$3$-directional} \mbox{box-spline} with directions $(\mathbf{u}_1, \mathbf{u}_2, \mathbf{u}_3)$, where $(\mathbf{u}_1, \mathbf{u}_2)~=~(\mathbf{e}_1, \mathbf{e}_2)$ forms the Cartesian basis of $\R^2$, and $\mathbf{u}_3 = \mathbf{e}_1 + \mathbf{e}_2$. If $\phi_1,\phi_2,\phi_3$ are centered unit rectangles, then (\ref{prod}) holds true with $D=3$ and
\begin{equation}
    \forall \V\xi\in\R^2,\ \  \widehat{\varphi^{\text{box}}}(\V\xi) = \prod_{d=1}^3  \text{sinc}\Big( \dfrac{\langle\V\xi, \mathbf{u}_d \rangle }{2\pi}\Big).
\end{equation}
\noindent Let $\V \alpha \in \R^6$ denote the vetor
\begin{equation}
\begin{aligned}[b]
\label{alpha}
    \V \alpha = &\Big(0,\, \sin\theta,2\sin\theta+\cos\theta, \ \cos\theta,\\ &\  \sin\theta+2\cos\theta,\  2(\sin\theta+\cos\theta)\Big).
\end{aligned}
\end{equation}

\noindent Then, by simplifying the convolution product (\ref{conv}) in the case of rectangle functions  as suggested in \cite{entezari}, we are able to express the x-ray transform of this $3$-directional box-spline as
    \begin{equation}
    \label{projbox}
    \varphi_\theta^{\text{box}}(y) = \dfrac{-1}{\alpha_1\alpha_3\alpha_5}\sum_{n=1}^6 (-1)^n\ (y-\alpha_n)_+^2,
    \end{equation}

\noindent where $y\in\R$, $\alpha_n$ is the $n$th component of $\V \alpha$, and $(x)_+~:=~\max(x,0)$. In the case where $\theta\in (0,\pi/4)$, we can perform the expansion of the sum in (\ref{projbox}) and collect the polynomial terms given by
\begin{equation}
\label{projectionbox}
\varphi_\theta^{\text{box}}(y) = \begin{cases}
    0, &  y < \alpha_0 \\[8pt]
    \dfrac{y^2}{\alpha_1 \alpha_3 \alpha_5}, & \alpha_0 \leq y < \alpha_1 \\[8pt]
    \dfrac{2y - \alpha_1}{\alpha_3 \alpha_5}, &  \alpha_1 \leq y < \alpha_5 \\[8pt]
    \dfrac{-y^2 + \alpha_5 y - 1}{\alpha_1 \alpha_3 \alpha_5}, & \alpha_5 \leq y < \alpha_{2} \\[8pt]
    \dfrac{-2y + 4 \alpha_3 + 3 \alpha_1}{\alpha_3 \alpha_5}, &  \alpha_{2} \leq y < \alpha_{4} \\[8pt]
    \dfrac{(y -2 \alpha_1 - 2 \alpha_3)^2}{\alpha_1 \alpha_3 \alpha_5}, &  \alpha_{4} \leq y < \alpha_{5} \\[8pt]
    0, &  y \geq \alpha_{5}.
\end{cases}
\end{equation}
\noindent This expression leads to the family of functions shown in Figure \ref{plotsplines}. We derive similar results for the other angle ranges by leveraging inherent symmetries of the box-splines.

\noindent A particular case of box-splines is the \fixme{case of} separable tensor-product B-splines. \fixme{Let $\beta^n$ denote} the univariate B-spline of degree $n$ \cite{unser1993b}. Then, for all $(x_1, x_2)\in\R^2$,
\begin{equation}
    \varphi(x_1, x_2) = \beta^n(x_1)\beta^n(x_2).
\end{equation}
This corresponds to a $2(n+1)$-directional box-spline with $n+1$ directions along $\mathbf{e}_1$ and $n+1$ directions along $\mathbf{e}_2$. We derived a closed-form expression for its x-ray transform in a similar way to (\ref{projbox}) for $n=2$. It is
\begin{equation}
\begin{aligned}
\varphi_\theta(y) =& \frac{1}{6 \, \alpha_1^2 \alpha_2^2} \Big(
    y_+^3
    - 2 \,(y - \alpha_1)_+^3
    - 2 \,(y - \alpha_2)_+^3\\
    &\mbox{}+(y - 2 \alpha_1)_+^3
    + 4 \,(y - \alpha_1 - \alpha_2)_+^3
    + (y - \fixme{2}\alpha_2)_+^3\\
    &\mbox{} - 2 \,(y - \alpha_3)_+^3
    - 2 \, (y - \alpha_4)_+^3
    + (y - \alpha_5)_+^3
\Big).
\end{aligned}
\end{equation}

\noindent More generally, x-ray projections of any $D$-directional box-spline can be computed exactly since the convolution of dilated rectangle functions is known.

\section{Image Reconstruction}
\subsection{Formulation of the Inverse Problem}

In tomography, the corrupted measurements $p\in L^2(\R^2)$ are x-ray projections of the signal of interest $f\in L^2(\R^2)$ and the corruption $\varepsilon$ is commonly assumed to be an additive white Gaussian noise with variance $\sigma^2$. Our model is
\begin{equation}
    p(\theta, y) = \mathcal{P}_\theta\{f\}(y) + \varepsilon(\theta, y).
\end{equation}
Since real measurements are acquired via discrete sensors and detectors, the measurements $p$ are sampled which yields $\mathbf{p}\in\R^M$ such that
\begin{equation}
    \mathbf{p} = \left(p(\theta_m, y_m)\right)_{m=1}^M ,
\end{equation}
\noindent where the set $\left((\theta_m, y_m)\right)_{m=1}^M$ encodes the acquisition geometry. Each component corresponds to a ray, as in  (\ref{rayparam}), which yields
\begin{equation}
    \mathbf{p} = 
    \begin{bmatrix}
           p_{1} \\
           \vdots \\
           p_{M}
         \end{bmatrix}
         =
         \begin{bmatrix}
           \mathcal{P}_{\theta_1} \{f\}(y_1) \\
           \vdots \\
           \mathcal{P}_{\theta_M} \{f\}(y_M)
         \end{bmatrix}
         +
         \begin{bmatrix}
           \varepsilon_{1} \\
           \vdots \\
           \varepsilon_{M}
         \end{bmatrix}.
\end{equation}

\noindent From this finite number of measurements, we \fixme{seek} to recover a finite number of {coefficients} of $f$ in some basis as in (\ref{xrtbasis}). By invoking the linearity of $\mathcal{P}$, one seeks to recover $\mathbf{c}\in \mathbb{R}^{N^2}$ such that
\begin{equation}
\mathbf{p} = \M H_\varphi \mathbf{c} + \V\varepsilon,
\end{equation}
where $\M H_\varphi \in \mathbb{R}^{M\times N^2}$ is the discretized version of the x-ray operator. More precisely, we have that
\begin{equation}
(\M H_\varphi)_{m,\mathbf{k}} = \mathcal{P}_{\theta_m}\{\varphi(\cdot - \mathbf{k})\}({y_m}) = \varphi_{\theta_m}(\tilde{y}_{m,\mathbf{k}}),
\end{equation}
where the entries depend on the choice of the generator $\varphi$ and $\tilde{y}_{m,\mathbf{k}}=(y_m-\langle \mathbf{k},\V\theta^\perp \rangle)$.
We now face the task of finding the solution of a maximum-likelihood optimization problem to obtain
\begin{equation}
\label{opti}
     {\mathbf{c}^*} = \argmin\limits_{\mathbf{c} \in \R^{N^2}}  \frac{1}{2\sigma^2} \norm{\mathbf{p} - \M H_\varphi \mathbf{c}}_2^2 + \lambda \M R_\varphi(\mathbf{c}),
\end{equation}
\noindent where $\M R_\varphi$ is a regularization term that enforces desirable properties of the solution and depends on the choice of $\varphi$, as studied by the authors of \cite{mehrsahtv, mehrsamri}. The parameter $\lambda\in\R_{\geq 0}$ controls the strength of the regularizer.

The recovery of the coefficients \fixme{${\mathbf{c}^*}$} of the target image from sampled measurements provides access to the continuous-domain signal defined by (\ref{xrtbasis}), which enables a reconstruction at any desired sampling rate. If the basis functions are uniform tiles matching the reconstruction grid (\textit{i.e.}, splines of degree 0, a.k.a.\ ``pixels''), then the underlying continuous-domain image is piecewise-constant, thereby offering no enhancement in quality at a higher sampling rate.
The projection operator $\M H_\varphi$ and its adjoint are implemented in a matrix-free format in accordance with Algorithm~\ref{alg:full}. We demonstrate in Section~\ref{section:xp} the effectiveness of our implementation in the context of inverse problems. 
\begin{figure}[t]
    \centering
    \includegraphics[width=0.8\linewidth]{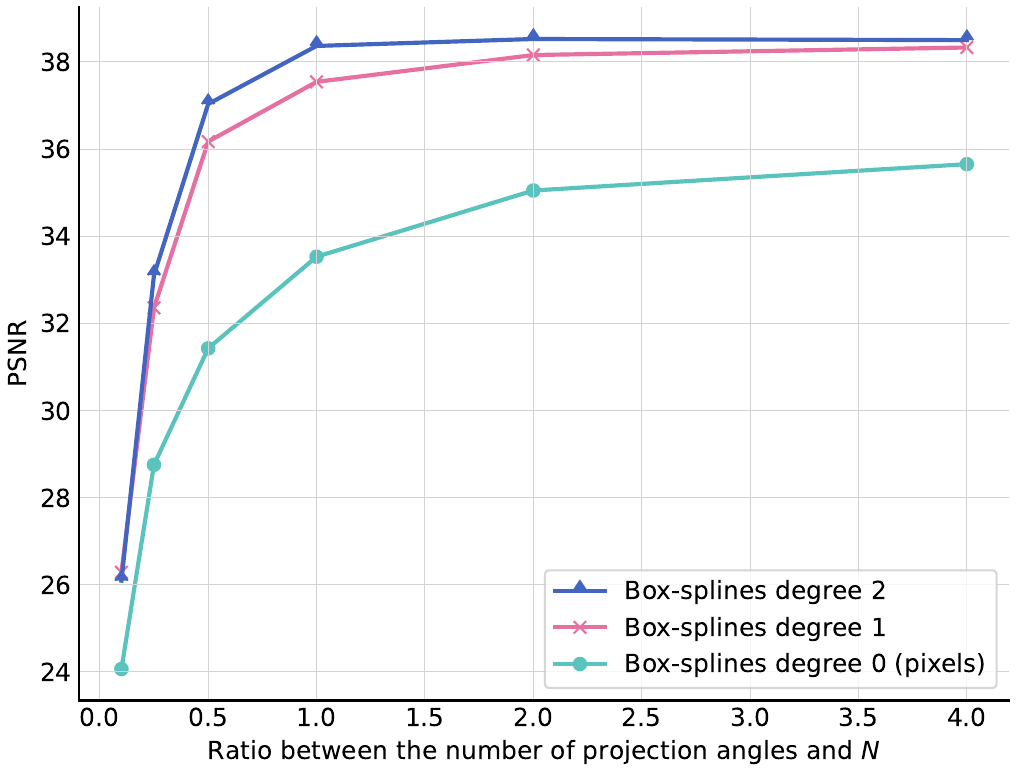}
    \caption{PSNR of the reconstructions in terms of the number of projection angles with a grid of size $N=500$.}
    \label{fig:ratio}                  
\end{figure}
\subsection{Experiments \fixme{with Synthetic Data}}
\label{section:xp}
\begin{figure*}
         \begin{tikzpicture}
         \node[anchor=south west,inner sep=0] (main2) at (3,0) {\includegraphics[clip,keepaspectratio, width=0.97\linewidth]{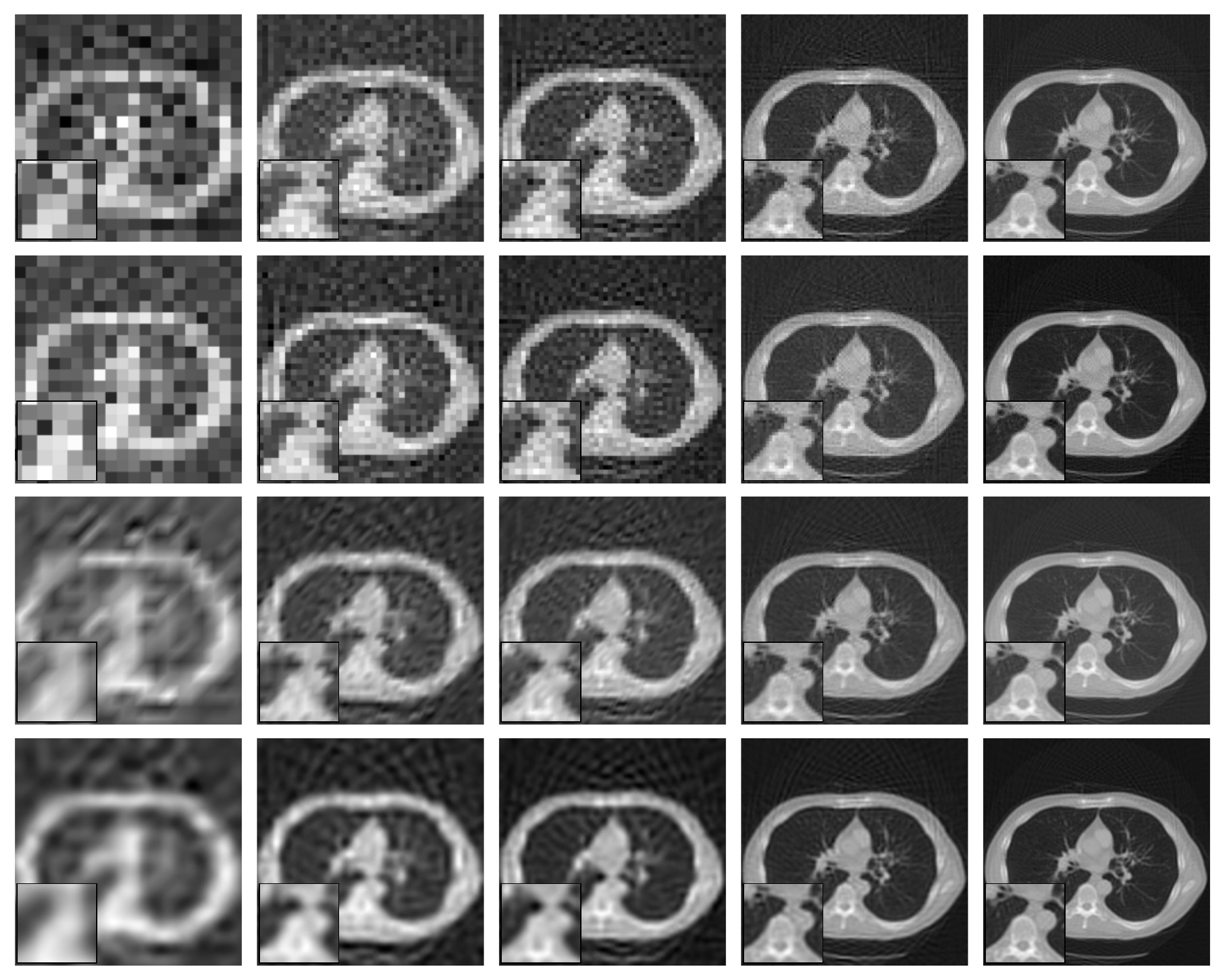}};
         \node[anchor=center, rotate=90, text=black, font=\small ] at (2.8cm,1.8cm) {Box-splines 2};
         \node[anchor=center, rotate=90, text=black, font=\small ] at (2.8cm,5.2cm) {Box-splines 1};
         \node[anchor=center, rotate=90, text=black, font=\small ] at (2.8cm,8.6cm) {Box-splines 0};
         \node[anchor=center, rotate=90, text=black, font=\small ] at (2.8cm,12cm) {Astra \texttt{`fanflat'}};

         \node[anchor=center, text=black, font=\small ] at (4.8cm,13.9cm) {$N_{\text{down}} = 20$};
        \node[anchor=center, text=black, font=\small ] at (8.2cm,13.9cm) {$N_{\text{down}} = 40$};
        \node[anchor=center, text=black, font=\small ] at (11.6cm,13.9cm) {$N_{\text{down}} = 50$};
        \node[anchor=center, text=black, font=\small ] at (15cm,13.9cm) {$N_{\text{down}} = 100$};
        \node[anchor=center, text=black, font=\small ] at (18.4cm, 13.9cm) {$N_{\text{down}} = 150$};
         \end{tikzpicture}
\caption{Cone-beam reconstructions by Algorithm~\ref{alg:full} in terms of grid size and model, with conjugate-gradient descent (30 iterations). Column: reconstructed signals with $N_{\text{down}}^2$ recovered coefficients. Row: Astra reconstruction and box-spline models of degree 0 (pixels), 1, and 2.}     \label{fig:cone}
\end{figure*}

We now conduct experiments that compare the reconstruction performance of spline-based operators with that of traditional pixel-based ones. In order to single out the impact of the discretizations, we focus on scenarios where the problem is sufficiently well posed to avoid the need for regularization (\textit{i.e.} $\lambda=0$ in (\ref{opti})). 
Specifically, we assume a regime in which $M$ is large, with many angles and many offsets, and in which there is no missing cone. For a conventional geometry where the angles and offsets are uniformly spaced, it has been established that this regime occurs when one measures at least $\pi N/2$ angles with at least $N$ offsets, where the reconstruction grid is of size $(N\times N)$. \fixme{This is known as the Crowther criterion}  \cite{rattey, Brooks1978}. Our own experiments in Figure~\ref{fig:ratio} confirm that about \fixme{$\pi N/2$ angles are needed in the traditional pixel-based approach, which we round up to $2N$ in the sequel}. For the basis functions of higher order but larger support proposed in this paper, however, we shall see that even fewer angles are required.\\

\begin{figure*}
        \centering
         \begin{tikzpicture}
         \node[anchor=south west,inner sep=0] (main2) at (3,0) {\includegraphics[trim=0cm 3cm 10cm 0cm,clip,keepaspectratio, width=0.93\linewidth]{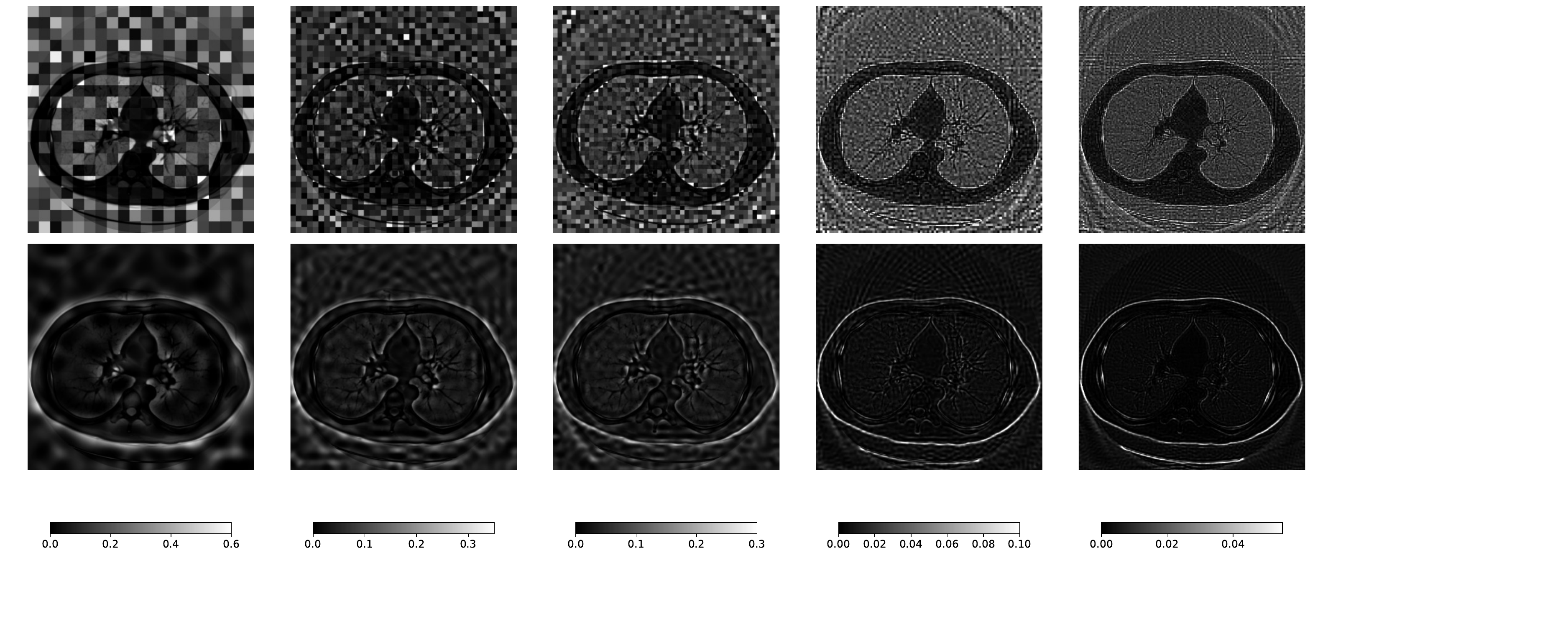}};

         \node[anchor=center, rotate=90, text=black, font=\small ] at (2.8cm,2.5cm) {Box-splines 2 (ours)};
         \node[anchor=center, rotate=90, text=black, font=\small ] at (2.8cm,5.6cm) {Astra \texttt{`fanflat'}};

        \node[anchor=center, text=black, font=\small ] at (4.8cm,7.15cm) {$N_{\text{down}} = 20$};
        \node[anchor=center, text=black, font=\small ] at (8.cm,7.15cm) {$N_{\text{down}} = 40$};
        \node[anchor=center, text=black, font=\small ] at (11.2cm,7.15cm) {$N_{\text{down}} = 50$};
        \node[anchor=center, text=black, font=\small ] at (14.5cm,7.15cm) {$N_{\text{down}} = 100$};
        \node[anchor=center, text=black, font=\small ] at (17.8cm, 7.15) {$N_{\text{down}} = 150$};
        \end{tikzpicture}
\caption{\fixme{Absolute difference images between the ground truth and the most relevant reconstructions from Figure~\ref{fig:cone}.}}     \label{fig:differences}
\end{figure*}

\begin{table*}[!ht]
\caption{\fixme{PSNR and SSIM for the reconstructions of Figure~\ref{fig:cone} in terms of baseline method (Astra and box-splines of degree 0) and our box-splines model (degrees 1 and 2), across grid size}. Bold: best PSNR and SSIM in each row.}
    \normalsize
    \centering
    \begin{tabular}{rcccc}
    \toprule
    {$N_{\text{down}}$} & {Astra \texttt{`fanflat'}} & Box-splines $0$ & Box-splines $1$ & Box-splines $2$ \\
    \midrule
    50   & (19.97, 0.58) & (18.29, 0.54) & (\textbf{20.68}, 0.70) & (20.46, \textbf{0.74}) \\
    100  & (24.81, 0.73) & (23.21, 0.71) & (25.74, 0.84) & \textbf{(26.05, 0.87)} \\
    150  & (27.54, 0.82) & (26.01, 0.81) & (27.97, 0.90) & \textbf{(28.62, 0.93)} \\
    250  & (32.08, 0.90) & (29.78, 0.89) & (32.50, 0.96) & \textbf{(32.85, 0.97)} \\
    375  & (34.90, 0.93) & (32.92, 0.93) & (35.71, 0.98) & \textbf{(35.96, 0.98)} \\
    1000 & (39.46, 0.97) & (38.74, 0.97) & (40.87, 0.99) & \textbf{(41.35, 0.99)} \\
    \bottomrule
\end{tabular}
    \label{tab:psnr_ssim_cone}
\end{table*}

\begin{figure*}[t]
    \centering
    \includegraphics[width=0.4\linewidth]{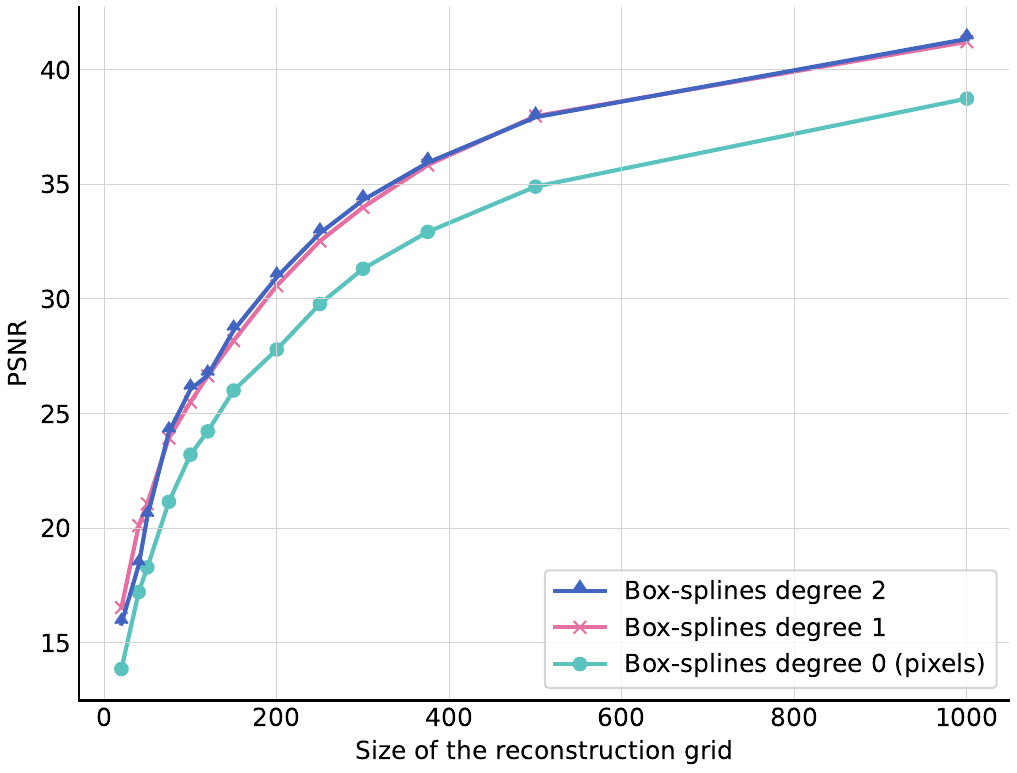}\includegraphics[width=0.4\linewidth]{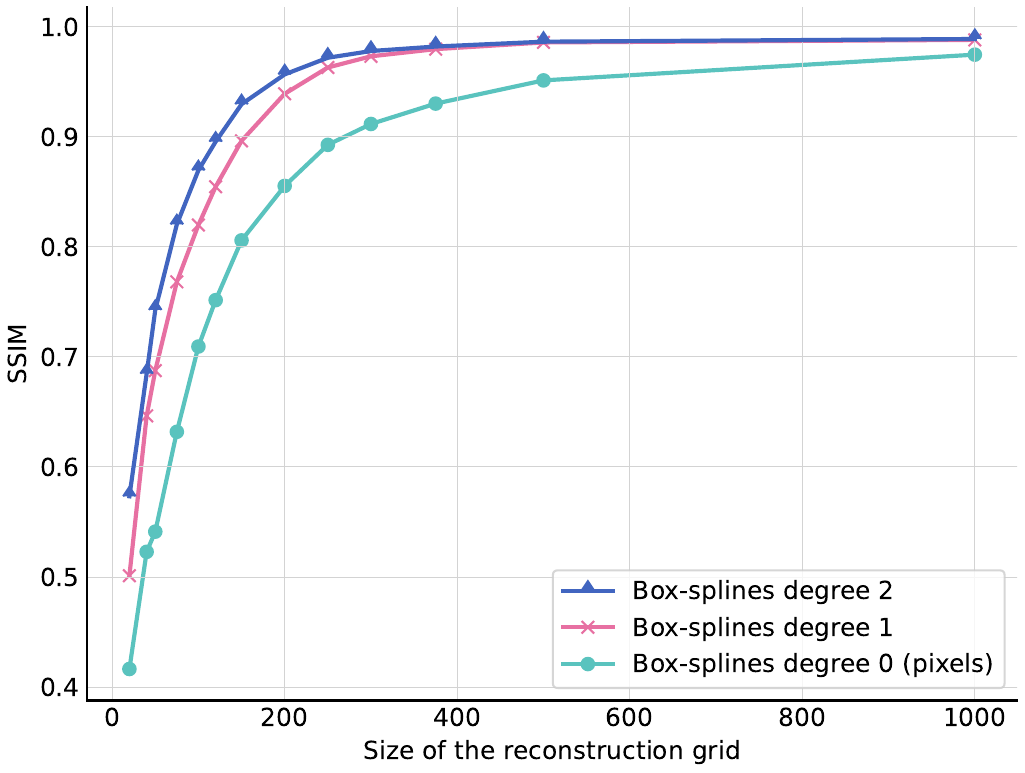}
    \caption{PSNR and SSIM of the reconstructions in terms of grid size for the cone-beam setup.}
    \label{fig:metrics_cone}                  
\end{figure*}
We perform image reconstructions as follows: we choose the ground truth image $\mathbf{x}_{\text{GT}}\in\mathbb{R}^{N^2}$ from the public dataset \cite{dataset} of lung CT medical images, and we upsample them with high resolution $N=3000$ using cubic \mbox{B-splines}. We acquire from $\mathbf{x}_{\text{GT}}$ the projections~$\mathbf{p}\in\mathbb{R}^{2N_{\text{down}}\times N_{\text{down}}}$.
Experiments with parallel-beam and cone-beam projection geometries have been conducted. In order to avoid committing an inverse crime with our model, the data $\mathbf{p}$ are acquired with the Astra-toolbox package under the \texttt{`line'} (\texttt{`line\_fanflat,'} respectively) projector with the \texttt{`parallel'} (\texttt{`fanflat,'} respectively) projection geometry. However\fixme{, even though we add noise to the projections ($\sigma^2 = 10^{-3}$)}, we commit an inverse crime with Astra reconstructions. The solution of the basic optimization problem (\ref{opti}) is obtained via conjugate-gradient descent to recover the coefficients ${\mathbf{c}^*}~\in~\mathbb{R}^{N_{\text{down}}^2}$ in a reconstruction grid of size $(N_{\text{down}}\times N_{\text{down}})$. Using ${\mathbf{c}^*}$, we sample the underlying continuous signal (\ref{xrtbasis}) on a finer grid, yielding ${\mathbf{x}^*}\in \mathbb{R}^{N^2}$.

\begin{table*}[t]
\caption{\fixme{GPU execution times (in milliseconds) for projection and back-projection operations under arbitrary and cone-beam geometries. The operations are performed with $N\times N$ rays on an image of size \mbox{$(N\times N)$}.}}
    \centering
\begin{tabular}{l c c c c | c c c c}
    \toprule
    \multirow{2}{*}{GPU times (ms)} &
    \multicolumn{4}{c|}{Arbitrary geometry} &
    \multicolumn{4}{c}{Cone-beam geometry} \\
    \cmidrule(lr){2-5} \cmidrule(lr){6-9}
    & $N=250$ & $N=500$ & $N=750$ & $N=1000$
    & $N=250$ & $N=500$ & $N=750$ & $N=1000$ \\
    \midrule
    \textbf{ASTRA} \\
    Projection       & 206   & 821   & 1850  & 3350  & 2.60 & 6.20 & 11.9 & 20.0 \\
    Back-projection  & 236   & 1050  & 2890  & 6490  & 1.80 & 4.80 & 9.20 & 17.5 \\
    \midrule
    \textbf{Ours} (degree 0) \\
    Projection       & 0.610 & 2.17  & 6.89  & 16.6  & 0.640 & 2.41 & 6.69 & 16.7 \\
    Back-projection  & 1.48  & 5.16  & 14.8  & 33.9  & 1.14  & 5.90 & 14.8 & 33.9 \\
    \textbf{Ours} (degree 1) \\
    Projection       & 2.23  & 5.08  & 16.8  & 38.7  & 1.50  & 5.26 & 16.9 & 38.8 \\
    Back-projection  & 3.85  & 16.5  & 43.3  & 93.5  & 3.78  & 16.5 & 43.2 & 93.2 \\
    \textbf{Ours} (degree 2) \\
    Projection       & 2.74  & 11.1  & 34.2  & 71.7  & 2.89  & 11.1 & 33.5 & 71.5 \\
    Back-projection  & 4.69  & 18.7  & 51.5  & 111   & 5.05  & 18.7 & 51.3 & 115 \\
    \bottomrule
\end{tabular}
\color{black}
    \label{tab:performance_gpu}
\end{table*}

\begin{table*}[t]
\caption{\fixme{CPU execution times (in milliseconds) for projection and back-projection operations under arbitrary and cone-beam geometries. The operations are performed with $N\times N$ rays on an image of size \mbox{$(N\times N)$}.}}
\centering
\begin{tabular}{l c c c c | c c c c}
    \toprule
    \multirow{2}{*}{CPU times (ms)} &
    \multicolumn{4}{c|}{Arbitrary geometry} &
    \multicolumn{4}{c}{Cone-beam geometry} \\
    \cmidrule(lr){2-5} \cmidrule(lr){6-9}
    & $N=250$ & $N=500$ & $N=750$ & $N=1000$
    & $N=250$ & $N=500$ & $N=750$ & $N=1000$ \\
    \midrule
    \textbf{ASTRA} \\
    Projection       & 296   & 1620  & 4590  & 9820   & 110  & 840  & 2910 & 6860 \\
    Back-projection  & 297   & 1610  & 4590  & 9880   & 110  & 840  & 2890 & 6840 \\
    \midrule
    \textbf{Ours} (degree 0) \\
    Projection       & 1.53  & 8.14  & 32.0  & 77.9   & 1.55  & 8.13  & 30.8  & 72.3 \\
    Back-projection  & 12.6  & 58.2  & 288   & 263    & 13.7  & 59.5  & 254   & 385 \\
    \textbf{Ours} (degree 1) \\
    Projection       & 5.00  & 36.7  & 119   & 264    & 5.11  & 35.9  & 118   & 271 \\
    Back-projection  & 21.5  & 127   & 541   & 1280   & 23.3  & 124   & 484   & 1380 \\
    \textbf{Ours} (degree 2) \\
    Projection       & 10.7  & 127   & 195   & 454    & 8.54  & 128   & 192   & 448 \\
    Back-projection  & 24.7  & 148   & 579   & 1450   & 26.3  & 152   & 564   & 1490 \\
    \bottomrule
\end{tabular}

\label{tab:performance_cpu}
\end{table*}

The reconstruction results are shown in Figure~\ref{fig:cone}  for the cone-beam setup. Except at the coarsest grid sizes, the Astra reconstructions are visually indistinguishable \fixme{from} those of \mbox{box-splines} of degree $0$ (pixels), while \mbox{box-splines} of degree $1$ or $2$ provide a more accurate representation of the target image. In Figure~\ref{fig:metrics_cone}, both the peak signal-to-noise ratio and the structural similarity are reported as curves. \fixme{Numerical values are provided in Table~\ref{tab:psnr_ssim_cone}. These clearly demonstrate that our model, based on higher-degree spline representations, consistently outperforms the traditional pixel-based approach (degree 0), with box-splines of degree~2 achieving the best overall reconstruction quality. The residual images in Figure~\ref{fig:differences} show the absolute difference between the reconstruction and the ground truth, and further illustrate the improvements achieved by higher-degree box-spline models. (We only kept Astra's \texttt{`fanflat'} model and our model of degree 2 for clarity.)}

\begin{figure*}
\centering
\begin{tikzpicture}

  % Top image
  \node[anchor=south,inner sep=0] (img1) at (0,0)
    {\includegraphics[clip,keepaspectratio, width=0.95\linewidth]{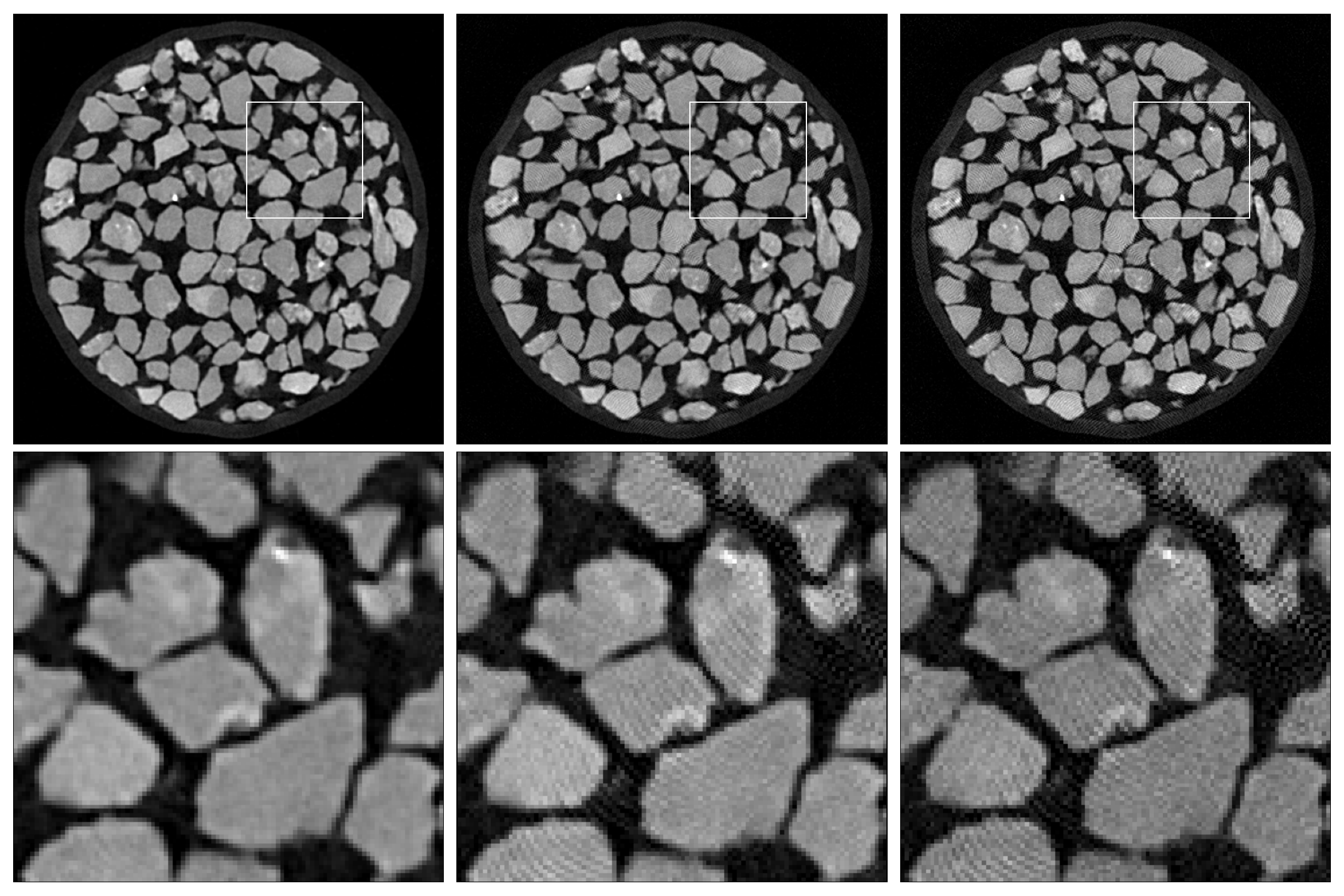}};

  % Top image annotations
  \node[anchor=center, text=black, font=\small ] at (5.4cm,11.25cm) {\fixme{Splines of degree 0 (ours)}};
  \node[anchor=center, text=black, font=\small ] at (0cm,11.25cm) {\fixme{Astra \texttt{`line\_fanflat'} model}};
  \node[anchor=center, text=black, font=\small ] at (-5.4cm,11.25cm) {\fixme{Splines of degree 2 (ours)}};

  \node[anchor=center, rotate=90, text=black, font=\small ] at (-8.5cm,8.4cm) {\fixme{Reconstruction with $800$ views}};
  \node[anchor=center, rotate=90, text=black, font=\small ] at (-8.5cm,3cm) {\fixme{Zoom on the box region}};

  % Bottom image: shifted down by y offset (e.g., -6.5cm)
  \node[anchor=north,inner sep=0] (img2) at ([yshift=-0.cm]img1.south)
    {\includegraphics[clip,keepaspectratio, width=0.95\linewidth]{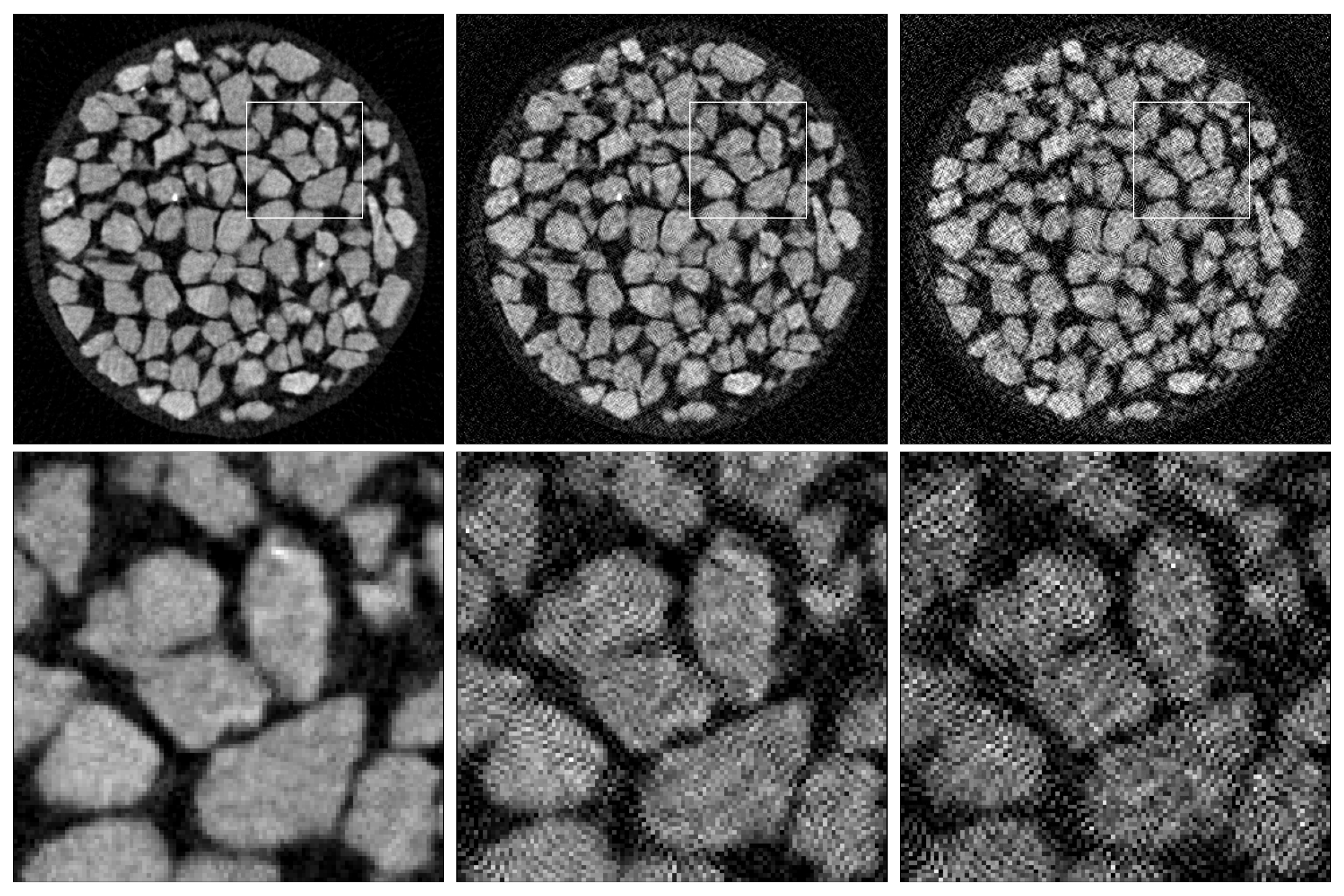}};

    \node[anchor=center, rotate=90, text=black, font=\small ] at (-8.5cm,-3cm) {\fixme{Reconstruction with $200$ views}};
    \node[anchor=center, rotate=90, text=black, font=\small ] at (-8.5cm,-8.4cm) {\fixme{Zoom on the box region}};
    
\end{tikzpicture}
\caption{\fixme{Reconstructions with real tomographic data for different models, with conjugate gradient descent (30 iterations). Astra's \texttt{`line\_fanflat'} model and our pixel-based model (splines of degree 0) are shown as a baseline for comparison with our model of splines of degree 2. First two rows: reconstructions with 800 projection angles. Last two rows: reconstructions with 200 projection angles.}}
\label{fig:experimental_results_full}
\end{figure*}

\subsection{\fixme{Experiments with Real Tomographic Data}}
\label{section:xp_real}
\fixme{We performed image reconstructions with real cone-beam data. The acquisition device is a micro CT-scan with one source and a flat detector screen, aligned with the rotation table of the sample between them. This experimental setup corresponds to a cone-beam geometry. The detector screen is composed of receptors of lateral size $0.127$ mm; the source-to-detector distance is $765.7$ mm and the source-to-object distance is $96.46$ mm. We calibrate the center of rotation of the sample, which is experimentally imperfectly matched with the center of the sample. We achieve this correction through a grid-search algorithm. In this case, the calibration leads to a shift of the center of rotation by $(-0.292)$ mm in the direction orthogonal to the source-detector axis. We have access to $800$ projections and seek to reconstruct the center slice of the volume in a reconstruction grid of size $(728\times728)$. The imaged sample is a collection of rocks inside a cylindrical box.  We demonstrate in Figure~\ref{fig:experimental_results_full} that our model provides superior image quality and artifact reduction compared to pixel-based ray-tracing baselines in two scenarios: with the sinogram of $800$ views; and with a quarter of the measurements where only $200$ views are retained.} \fixme{We compare our model of splines of degree 2 with two baselines: Astra's pixel-based \texttt{`line\_fanflat'} model; and our model of splines of degree 0 (pixels), which yield very similar reconstructions. The reconstruction algorithm remains identical across all cases; only the underlying model differs.}

\subsection{Runtime Evaluation \fixme{and Discussion}}
\label{runtime}
\fixme{Sections \ref{section:xp} and \ref{section:xp_real} evaluated the performance of our algorithm in terms of reconstruction quality. We now evaluate the performance of the algorithm in terms of computation speed.}
\fixme{As expected, the computation times of our algorithm, shown in Tables~\ref{tab:performance_gpu} and \ref{tab:performance_cpu}, does not depend on the projection geometry.} 
\fixme{By contrast, the performance of Astra, which is highly optimized for parallel- and cone-beam geometries, drops significantly for arbitrary geometries. This makes our algorithm particularly well suited for imaging modalities such as positron electron tomography \cite{pet_aleix}, plasma diagnostic imaging, or additive volumetric manufacturing where the geometry is not structured.}
GPU benchmarks were carried out on an NVIDIA RTX A5000 with five warm-up runs. The reported times represent the average of ten executions, and provided by the \texttt{cupyx.benchmark} module. \fixme{Our speedup on CPU is due both to the just-in-time compilation of the ray-tracer and to the fine-grain parallelism that is performed. Each ray is treated independently, which allows us to distribute the computation to multiple cores.}

\fixme{We report in Tables~\ref{tab:performance_gpu} and \ref{tab:performance_cpu} the computation times for both Astra and our spline-based model of degree 0 (pixel baseline), 1 and 2. On average, the runtime increases by a factor of 2.6 for degree 1 and 4.1 for degree 2 relative to the pixel baseline. These averages are computed across all setups of Table~\ref{tab:performance_gpu} and include both projection and back-projection operations on the GPU. The use of box-splines of degree 2 provides the best tradeoff between quality and runtime.} They have the same approximation properties as the separable B-splines of degree 2 (presented in Table~\ref{tab:psnr_ssim_cone_separable} of the Appendix) but have a more compact support that extends over three cells only, against five for the separable B-splines of the same degree. Their support corresponds to that of Figure~\ref{octagon_shape} with $K=1$, an optimal configuration for our algorithm as it corresponds to the largest support for $K$ neighbor evaluations. Splines of degree higher than 2 do not provide a sufficient gain in quality to justify their use, as saturation with respect to the degree is observed in our results (Table~\ref{tab:psnr_ssim_cone} and Figure~\ref{fig:metrics_cone}).

\section{Conclusion}

We presented a framework to compute the x-ray transform operator for 2D signals decomposed into basis functions, specifically, box-splines and separable B-splines. We derived closed-form expressions of their x-ray projections for arbitrary integration lines. Using these analytical derivations, our algorithm can tackle the case of overlapping basis functions with a neighbor-based approach in a scalable and efficient way. Our projector and back-projector form matched adjoint pairs and allow for any projection geometry without affecting performance, which makes our algorithm competitive with open-source packages. The proposed method uses ray tracing by taking advantage of a computer-graphics library. In the context of image reconstruction, coarse- and fine-grid regimes have both been studied. We obtain convincing experimental results for continuous-domain inverse problems, where basis functions of higher-degree lead to better reconstructions.  From our experiments, we recommend the box-spline of degree 2 as the basis that provides the best tradeoff between runtime and quality. Our methodology can be naturally extended to 3D settings, which we leave for future work. The code of our implementation\footnote{https://github.com/HaouchatY/rt\_tomo\_basis\_functions/} will be made publicly available along with the scripts that reproduce the figures of the paper.

\section{Acknowledgments}
The research leading to these results has received funding from the European Research Council under Grant ERC-2020-AdG FunLearn-101020573 and by the Swiss National Science
Foundation under Sinergia Grant CRSII5\_198569. \fixme{We acknowledge the ENAC Interdisciplinary Platform for X-ray micro-tomography (PIXE) for providing the real CT data used in our experiments.}

\newpage
\appendix

\label{appendixproof}
\begin{proof}[Proof of Lemma~\ref{lemma}]
We provide a proof of (1) and note that (2) can be proven in the same manner.\\

Let $(y, \theta) \in \mathbb{R}^2$ represent the parameters of a ``mainly vertical'' ray. Consider a basis function $\varphi(\cdot - \mathbf{k})$ that contributes to the integral, for some $\mathbf{k} \in \Omega$. Without loss of generality, we can assume that $\mathbf{k} = \mathbf{0}$, up to a shift in the parameter \( y \) to \(( y - \langle \mathbf{k}, \V\theta^\perp \rangle) \). 
Since $\varphi$ is compactly supported inside a disk of radius $R$, it holds that
\begin{equation}
\label{bounded}
    \forall \mathbf{x}\in\R^2 \ \text{s.t.}\ \left\|\mathbf{x}\right\|_2\geq R,\ \ \varphi(\mathbf{x}) =0.
\end{equation}
Since the ray is ``mainly vertical,'' it intersects the horizontal axis directed by $\mathbf{e}_1$ at the horizontal component  
\begin{equation}
\gamma(y, \theta) = y/\sin\theta,    
\end{equation}

\noindent where $|\sin\theta| \geq 1/\sqrt{2}$. From (\ref{rayparam}), we note that $y$ represents the Euclidean distance from the ray to the origin. It follows that  
\begin{equation}
|\gamma(y, \theta)| \leq |\sqrt{2} y| \leq \sqrt{2} R.
\end{equation}
The last inequality holds because, if the ray crosses the support of $\varphi$, then $y$ must be at most $R$, as stated in \eqref{bounded}. Now, suppose that the domain is discretized into a uniform grid with basis vectors $(\mathbf{e}_1, \mathbf{e}_2)$ and stepsize 1. Then, the index along $\mathbf{e}_1$ of the cell containing the intersection of the ray with the horizontal axis is bounded by
$\lceil \sqrt{2} R - 1 \rceil$.
Consequently, the ray will cross one index in the set  
\begin{equation}
\{\mathbf{k} - \lceil \sqrt{2} R - 1 \rceil \mathbf{e}_1, \dotsc, \mathbf{k} + \lceil \sqrt{2} R - 1 \rceil \mathbf{e}_1\}.
\end{equation}
Conversely, when we evaluate the contribution of the cells up to $\lceil \sqrt{2} R - 1 \rceil$ horizontal neighbors away from those directly crossed by the ray, the basis function centered at $\mathbf{k}$ will necessarily be included in the computation.  
\end{proof}

\begin{proof}[Proof of Proposition~\ref{prop}]
    Let $\varphi$ be a function in $L^1(\R^2)$ that verifies (\ref{prod}).
    The Fourier-slice theorem \cite{fourierslice} states that the Fourier transform of the x-ray projection in a direction $\V\theta$ of $\varphi$ equals the $1$-dimensional slice through the origin of $\hat{\varphi}$ in the same direction. This is formalized as
    \begin{equation}
    \label{slice}
        \begin{aligned}[b]
            \forall \xi\in\R,\ \ \hat{\varphi_\theta} (\xi) &= \hat{\varphi}(\xi \V\theta)\\
        &= \prod_{d=1}^D \hat{\phi}_d(\xi \langle \V\theta, \mathbf{u}_d \rangle),
        \end{aligned}
    \end{equation}

\noindent where $\varphi_\theta$ is as in (\ref{eqproj}). Here, each term of the Fourier factorization is a dilation of $\hat{\phi}_d$ by a factor $\langle \V\theta, \mathbf{u}_d \rangle$. When $\langle \V\theta, \mathbf{u}_d \rangle=0$, the $d$th term is the constant $\hat{\phi}_d(0)$.
Otherwise, the dilation property of the Fourier transform states that
\begin{equation}
    \label{dil}
    \forall \xi \in \R, \quad \hat{f}\Big(\lambda \xi \Big) = \dfrac{1}{|\lambda|}
    \widehat{f_{\lambda^{-1}}}(\xi),
\end{equation}
where $f_{\lambda^{-1}}(x) := f\Big(\dfrac{x}{\lambda}\Big)$ with $\lambda \neq 0$.\\

\noindent The proof is finally completed by injecting (\ref{dil}) into (\ref{slice}) with $f=\phi_d$. We conclude using the Fourier-convolution theorem.
\end{proof}

\begin{table}[t]
\caption{PSNR and SSIM in terms of model (separable B-splines of degree 0, 1, and 2) and grid size for cone-beam reconstructions. Bold: best PSNR and SSIM in each row.  }
    \normalsize
    \centering
\begin{tabular}{rccc}
    \toprule
    {$N_{\text{down}}$} & B-splines $0$ & B-splines $1$ & B-splines $2$ \\
    \midrule
    50    & (18.29, 0.54) & (\textbf{21.12}, 0.70) & (20.50, \textbf{0.74}) \\
    100   & (23.21, 0.71) & (25.80, 0.84) & \textbf{(26.05, 0.87)} \\
    150   & (26.01, 0.81) & (28.27, 0.91) & \textbf{(28.63, 0.93)} \\
    250   & (29.78, 0.89) & (32.64, 0.96) & \textbf{(32.87, 0.97)} \\
    375   & (32.92, 0.93) & (35.86, 0.98) & \textbf{(35.94, 0.98)} \\
    1000  & (38.74, 0.98) & (41.10, 0.99) & \textbf{(41.34, 0.99)} \\
    \bottomrule
\end{tabular}
    
    \label{tab:psnr_ssim_cone_separable}
\end{table}

\begin{figure}[t]
\centering
         \begin{tikzpicture}
         \node[anchor=south west,inner sep=0] (main2) at (0,0.5) {\includegraphics[scale = 0.85]{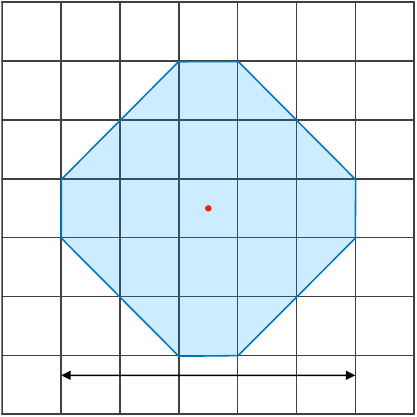}};
         \node[anchor=center, text=black, font=\small ] at (3cm, 0.8cm) {\contour{white}{$L=2K+1$}};
         \end{tikzpicture}
\caption{Octagonal shape of size $L$ used in Lemma~\ref{lemma}. It corresponds to the largest support for $K$ neighbor evaluations.}     
\label{octagon_shape}
\end{figure}

\printbibliography
\end{document}